%	Stability of Adversarial Routing with Feedback

%	Chlebus & Cholvi & Kowalski

%	full version for arXiv:
%		after Networks 2015
%		which was after NETYS 2013

\documentclass[11pt]{article}

\usepackage{amssymb,amsmath,amsfonts,latexsym,bbm,ae,aecompl}
\usepackage{color,graphicx,enumitem,subfigure}

\textheight      		9in
\textwidth       		6.5in
\oddsidemargin   	0pt
\evensidemargin  	0pt
\topmargin       		0pt
\marginparwidth  	0pt
\marginparsep    	0pt
\headheight      		0pt
\headsep         		0pt

%\usepackage{setspace}
%\onehalfspacing

\newcommand{\qed}{\hfill $\square$ \smallbreak}
\newenvironment{proof}{\noindent\textbf{Proof:}}{\qed}

\newcommand{\FF}{\vspace*{\medskipamount}}
\newcommand{\FFF}{\vspace*{\bigskipamount}}

\newcommand{\BBB}{\vspace*{-\bigskipamount}}

\newcommand{\mN}{\mathbb{N}}
\newcommand{\Paragraph}[1]{\BBB\paragraph{#1}}

\newlength{\pagewidth}
\setlength{\pagewidth}{\textwidth}
\addtolength{\pagewidth}{-6em}

\newlength{\captionwidth}
\setlength{\captionwidth}{\textwidth}
\addtolength{\captionwidth}{-8em}

\newtheorem{theorem}{Theorem}
\newtheorem{lemma}{Lemma}
\newtheorem{definition}{Definition}

\newtheorem{proposition}{Proposition}
\newtheorem{corollary}{Corollary}

\begin{document}

\baselineskip           	3ex
\parskip                	1ex

\title{Stability of Adversarial Routing with Feedback\,\footnotemark[1]\FFF\FFF}

\author{	Bogdan S. Chlebus\,\footnotemark[2] 
		\and
		Vicent Cholvi\,\footnotemark[3]
		\and
		Dariusz R. Kowalski\,\footnotemark[4]\FFF\FFF }
	
\footnotetext[1]{The results of this paper were announced in a preliminary form 
				in~\cite{ChlebusCK-NETYS13} and in its final form in~\cite{ChlebusCK15}.}
		
\footnotetext[2]{
		Department of Computer Science and Engineering,
                	University of Colorado Denver,
		Denver, Colorado, USA. 
		Supported by the NSF Grant 1016847.}

\footnotetext[3]{
		Department of Computer Science, 
		Universitat Jaume I, 
		Castell\'on, Spain. 
		Supported by the MEC Grant TIN2011-28347-C01-02.}
			
\footnotetext[4]{
		Department of Computer Science,
               	University of Liverpool,
               	Liverpool, UK. 
		Supported by the EPSRC grant EP/G023018/1.}

\date{}

\maketitle

\FF

%: abstract

\begin{abstract}
We consider the impact of scheduling disciplines on performance of routing in the framework of adversarial queuing.
We propose an adversarial model which reflects stalling of packets due to transient failures and explicitly incorporates feedback produced by a network when packets are stalled.
This adversarial model provides a methodology to study stability of routing protocols when flow-control and congestion-control mechanisms affect the volume of traffic.
We show that any scheduling policy that is universally stable, in the regular model of routing that additionally allows packets to have two priorities, remains stable in the proposed adversarial model.

\vfill

\noindent
\textbf{Keywords:} 
adversarial queueing, 
adversary with feedback,
packet routing, 
scheduling policy, 
stability, 
universal stability,
link fault,
transient fault.
\end{abstract}

\vfill

\thispagestyle{empty}
\setcounter{page}{0}
\newpage

\section{Introduction}

We consider routing in communication networks when transient transmission failures and congestion-control mechanisms affect the number of packets handled by nodes. The overall goal is to have fluid traffic  and avoid a congestive collapse, as means to optimize the use of a network.

The following are general comments relevant to the methodological approach considered in this work.
First, routing protocols  operate in environments involving flow control and congestion control.
In the real-world implementations of routing, packets are dropped when the time assigned for processing has been surpassed, which has a stabilizing effect on performance.
The algorithms we consider do not drop packets before they are delivered to their destinations.
This requires assuming potentially unbounded private memory at nodes to accommodate any number of packets in transit.
Second,  stochastic approaches to model performance usually rely on strong assumptions about how traffic is generated, which may be considered too constrained.
In spite of that, stochastic approaches  encounter technical challenges to assess factors underlying a network's performance~\cite{Srikant-book2004}. 
In this paper, we consider worst-case bounds on the performance of routing algorithms in adversarial frameworks of packet generation.
This allows one to obtain insights into traffic efficiency phenomena while avoiding making ad-hoc stochastic assumptions and abstracting from low level mechanisms implemented on the network and transport layers.

Adversarial queuing was proposed as a methodology to analyze worst-case bounds on routing performance in a framework of traffic environment determined by parameters like injection rates and burstiness~\cite{AndrewsAFLLK01,BorodinKRSW01}. 
The basic aspect of a satisfactory behavior of a system in adversarial queuing is stability, which is defined to mean that  the number of packets handled simultaneously is bounded at all times.

Our goal is to extend the model of adversarial queuing by incorporating features representing congestion control.
This includes the feedback provided by the network to the nodes to notify them that some packets have been stalled.
Delayed feedback is more realistic than assuming that failures are known in advance when an execution of a routing protocol starts. 

The routing protocols that we consider do not drop packets intentionally, but still some packets may be delayed due to malfunctioning of a communication infrastructure.
This includes transient wire-link failures and interferences on wireless links, to the effect that packets may  fail to be successfully transmitted between nodes. 

Another related situation occurs when scheduled packets are delayed due to nodes being switched off for energy savings. 
The recently adopted IEEE 802.3az Energy Efficient Ethernet (EEE) standard, as described in~\cite{IEEE-EEE}, is expected to be conducive to energy savings in local area networks by implementing mechanisms to have nodes temporarily unavailable to cooperate in routing~\cite{ChristensenRN10}.

%: paragraph: our contributions

\Paragraph{Our contributions.}

Now we summarize the contributions of this paper.
We propose an adversarial model to study routing in faulty systems, which incorporates feedback about packets that are stalled on their itineraries. 
The model is an extension of leaky-bucket regulations to model injecting packets into a system subject to constraints on injection rate and burstiness.
In the proposed adversarial model, the adversary learns of transient failures on links that caused packet delay after some time rather than instantaneously, which represents interaction  of routing protocols with congestion-control mechanisms.
Knowledge of the adversary is interpreted as an additional constraint on the power to inject packets.
Protocols are considered stable when they are universally stable and unstable otherwise.
We show that any scheduling policy that is unstable under the regular adversarial queueing model remains such in the proposed adversarial model.
Next, we demonstrate that any scheduling policy universally stable in the $2$-priority model remains stable in the adversarial model that we propose. 
This extends our understanding of universal stability in adversarial queuing with prioritized packets,  as obtained by \`{A}lvarez et al.~\cite{AlvarezBDSF05}.
Finally, we show a possible extension of the proposed adversarial model to capture permanent link faults.

%: paragraph: related work

\Paragraph{The related work.}

The adversarial methodology to study store-and-forward routing in wired networks was proposed by Andrews et al.~\cite{AndrewsAFLLK01} and Borodin et al.~\cite{BorodinKRSW01}.  
Adversarial communication in wireless networks was considered by Andrews and Zhang~\cite{AndrewsZ07}.
Stability of broadcast protocols in adversarial multiple-access channels was studied by Anantharamu et al.~\cite{AnantharamuC15,AnantharamuCKR-JCSS19}, Bender et al.~\cite{BenderFHKL05}, and Chlebus et al.~\cite{ChlebusKR-DC09,ChlebusKR-TALG12}.

Adversarial models capturing failures have been proposed in the literature in various network settings. 
Borodin et al.~\cite{BorodinOR04} considered slowdowns associated with links.
The papers~\cite{BlesaCFLMSST09,Cholvi08,KoukopoulosMS-TCS07,KoukopoulosMS-JPDC07} considered dynamic changes in the link capacities, with the intention to interpret such transient decreasing of capacity of a link as a transient failure of the link.

\'{A}lvarez et al.~\cite{AlvarezBDSF05} proposed a model that allows transient disruptions of the connectivity of the system. 
That model was extended by \'{A}lvarez et al.~\cite{Alvarez-TCS11} to incorporate node failures. 
The models mentioned above assume that the adversary can make a link fail at any round.
The papers~\cite{AlvarezBDSF05,Alvarez-TCS11,LimJA12} assume that, at each round, the adversary knows when links fail. 
It is then natural for adversaries to be equipped with the power to adjust the injection of packets to such  events, possibly even before link failures occur.
We propose an approach in which the constraints on the adversary, in terms of the injection rate and burstiness, are modified after some time delay triggered by malfunctioning of links.

The adversarial approach has been applied to modeling malfunctioning of wireless networks, including single-hop multi-channels.
Bhandari and Vaidya~\cite{BhandariV07,BhandariV10} considered broadcast protocols in multi-hop wireless networks with nodes prone to failures.
Gilbert et al.~\cite{GilbertGKN-INFOCOM09} considered a multi-channel where the adversary controls how information flows on subsets of channels.
Meier~et al.~\cite{MeierPSW09} considered adversarial multi-channel single-hop networks when  some $t$ channels out of $m$ could be disrupted in a round, with $m$ known and $t$ not known. 

Adversarial queuing was applied when studying interference and jamming in wireless networks, including single-hop multi-channels and multiple access channels.
Lim et al.~\cite{LimJA12} proposed an adversarial model to capture interferences among the links in wireless networks. 
In this case, at each round the adversary assigns specific edge rate vectors that are assumed to keep the network stable.
These vectors can be interpreted as reflecting the degree to which edges fail by not providing their full capacity.
Anantharamu et al.~\cite{AnantharamuCKR-JCSS19} considered multiple access channels with adversarial jamming, when the attached stations perceive jamming as colliding attempts by different stations to access the channel.
Awerbuch~et al.~\cite{AwerbuchRS-PODC08} studied saturation throughput of randomized protocols  in adversarial multiple access channels subject to jamming.
Gilbert~et al.~\cite{GilbertGN09} studied single-hop multi-channel networks with communication subject to adversarial jamming.

For a general discussion of topics related to the mechanisms of flow and congestion control, see~\cite{MamatasHT07,Srikant-book2004}.

\section{The adversarial model with feedback}

\label{sec:model}

We propose an adversarial approach to study stability of routing which captures packet delays due to failures of network elements.
This methodology is an extension of the regular leaky-bucket adversarial model determined by injection rate and burstiness.
The new component is the feedback from the network after an unsuccessful transmission. This feedback restricts the adversary's capability to inject packets.

We will model networks as directed graphs $G = (V,E)$, where the vertices in~$V$ represent the nodes of the network and the edges in~$E$ are the links connecting nodes. 
The orientation of an edge represents the direction in which the link can transmit data.
The networks we consider are \emph{synchronous}, in that an execution of a communication protocol is partitioned into rounds.

Each packet is injected by the adversary into some node and assigned a path through the network to traverse. 
Such paths cannot contain the same link more than once. 
If more than one packet wishes to cross an edge~$e$  in a round,  then a routing protocol chooses one of these packets to send across~$e$, while the remaining packets are kept in a queue at the tail of the edge~$e$. 
When a packet reaches the destination node then it is absorbed, which means that it disappears.

A packet travels through the network with additional information associated with it, like the destination address or the round when it was injected into the network.
A packet encapsulated with this information makes an atomic unit of data to be transmitted through links, which we call simply a \emph{message}.
Messages and rounds are scaled to each other, in that it takes one round to transmit a message through a link.

In this work, we consider routing environments in which a packet scheduled to be transmitted over a link in a round may fail to traverse the link, this possibly happening for a consecutive number of rounds at a time. 
When we use the terms  ``faulty round'' and ``faulty link,'' these refer to situations where failures in messages traversing links occur.
We assume that messages are never lost in transmissions, in that they are successfully handed over from a node to the next neighbor on the traversed path, until the packets reach their  destination.

\subsection{A leaky-bucket regulation}

We define the adversarial model by how traffic is regulated.
We use a traffic descriptor using the notion of a leaky-token bucket, as proposed by Turner~\cite{Turner02}.
Traffic demand is determined by packets injected into the network, each packet being assigned a path to traverse.
The notion of packets becoming stalled in their journeys is to represent a general malfunctioning of the system that results in a packet getting delayed, when an attempted transmission on a link fails, regardless of what is the reason of failure.
A packet is \emph{stalled} in round $t$ if it is attempted to be transmitted but the transmission fails.

An adversary is defined by three parameters: \emph{injection rate~$r$}, such that $0<r \le 1$, \emph{burstiness~$b$}, which is a positive integer, and \emph{feedback delay~$\delta$}, which is also a positive integer.
These three parameters together determine the \emph{adversarial type~$(r,b,\delta)$}.

We will consider two kinds of virtual objects called tokens and antitokens.
A \emph{token} is in a bucket and represents the ability to inject a packet.
An \emph{antitoken} represents a stalled packet, and so the need to  decrease the traffic by one packet to avoid congestion.

A single \emph{bucket} is a variable~$K$ storing a number; $K$ is initialized to~$0$.
When $K\ge 0$, then~$\lfloor K \rfloor$ is interpreted as the number of tokens in the bucket.
The bucket's capacity is~$b$.
In general, $K$ may assume negative values; in such a case the bucket does not contain any tokens.
The operations performed on $K$ are as follows.
\begin{enumerate}
\item[1)] 
In each round, first $r$ is added to the bucket~$K$ by the assignment $K\leftarrow K+r$.
If at this point $K>b$ then $K$ is immediately modified by the assignment $K\leftarrow b$, which is interpreted as the bucket's overflow.

\item[2)] 
The adversary injects some $i$ packets into the system and simultaneously removes $i$ tokens from the bucket $K$ by performing $K\leftarrow K-i$.
For this to be possible to perform, the inequalities $0<i\le K$ need to hold.

\item[3)] 
The adversary controls when packets are stalled. Each round a packet is stalled, an antitoken $g$ is created, carrying a value. 
The value of a newly created antitoken is initialized to~$\delta$.
The value of an antitoken gets decremented by~$1$ in each round.
An antitoken disappears in two possible ways, as decided by the adversary.
One results in removing the antitoken and simultaneously modifying $K\leftarrow K-r$ while the token's value is still positive.
Another is when the value becomes~$0$, then the antitoken disappears, and simultaneously $K\leftarrow K-r$; this represents the maximally delayed feedback. 
\end{enumerate}

Intuitively, when the adversary decides to annihilate a token, then this represents a moment when the adversary obtains feedback from the network of a stalled packet.
One token and one antitoken disappear simultaneously, as by annihilation resulting from their getting mixed together, which explains the terminology.

This completes the specification how a single bucket operates, when it is considered in isolation independently from other buckets.
The complete picture is such that we associate a bucket~$K_e$ with each directed edge~$e$ of the network.

The operations on these buckets are coordinated as follows:
\begin{enumerate}
\item[1)] 
Every bucket gets incremented by $r$ in each round, subject to possible overflow which makes a bucket store precisely $b$ tokens.

\item[2)] 
A packet has a path assigned to traverse; when a packet gets stalled then an antitoken is created for each bucket~$K_e$ associated with an edge~$e$ that the packet is still to traverse; all these antitokens are said to be \emph{related}.

\item[3)] 
When the adversary injects a packet to traverse a path, then it removes a token from each bucket~$K_e$ associated with any edge~$e$ of the path.
For this to be possible to be performed, each bucket on the path needs to include at least one token.

\item[4)] 
When the adversary destroys an antitoken~$g$ created by a stalled packet~$p$ on some link, then such a destruction, and the matching operation $K\leftarrow K-r$, is performed on each related token $g'$ created on a link which $p$ was still to traverse when $g$ was created, and the bucket associated with a link that $p$ was still to traverse when $g$ was created.
\end{enumerate}
This completes the specification of how the adversary can inject packets into the network.

\subsection{Comparison with the regular adversary}

The regular leaky-bucket adversary is defined by two parameters: \emph{injection rate~$r$}, such that $0<r \le 1$, and a positive integer \emph{burstiness~$b$}.
These two parameters together determine the \emph{adversarial type~$(r,b)$}.

%: lemma: comparing adversarial models

\begin{lemma}
\label{lem:comparing-adversarial-models}

A delayed-feedback leaky-bucket adversary of type $(r,b,\delta)$ is at least as powerful as the regular leaky-bucket adversary of the type $(r,b)$.
\end{lemma}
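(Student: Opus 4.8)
The plan is to argue by a direct embedding rather than a genuine simulation: I would show that every behavior of the regular $(r,b)$ adversary is also available to the delayed-feedback $(r,b,\delta)$ adversary, so the latter inherits all the power of the former. The key observation is that the two leaky-bucket regulations coincide exactly whenever no packet is ever stalled, since the entire antitoken machinery of the delayed-feedback model is triggered only by stalling, which the adversary controls.

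First I would fix an arbitrary admissible behavior of the regular adversary, namely a schedule of rounds together with the injected packets and their assigned paths, which by definition respects on every edge $e$ the rule that increments the bucket $K_e$ by $r$ each round (capping at $b$) and removes one token from each bucket lying on the path of an injected packet, subject to each such bucket holding at least one token. I would then have the delayed-feedback adversary replay precisely this injection sequence while choosing never to stall any packet. Because the delayed-feedback model explicitly lets the adversary control when packets are stalled, this is a legitimate strategy, and it corresponds to running the protocol in a failure-free environment.

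Next I would verify that under this no-stalling strategy the two regulations evolve identically, bucket by bucket. The only operations distinguishing the delayed-feedback model are the creation of antitokens on the edges still to be traversed by a stalled packet, together with the matching decrements $K_e \leftarrow K_e - r$ performed when those antitokens are annihilated. If no packet is ever stalled, no antitoken is ever created, so none of these extra decrements ever fire. Consequently each bucket $K_e$ undergoes exactly the increment-and-inject dynamics of the regular model, and the feasibility condition for injecting along a path---that every bucket on the path hold a token---is the same in both models. Hence the replayed sequence is admissible for the delayed-feedback adversary, which establishes that it is at least as powerful as the regular adversary.

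Since this is a special-case embedding, I do not expect a serious obstacle; the one point requiring care is confirming that the per-edge coordination of buckets---incrementing every bucket each round and removing a token from every bucket along an injected packet's path---matches verbatim between the two models once the antitoken mechanism is switched off. Checking this amounts to comparing the two lists of bucket operations and noting that, in the absence of stalling, they agree term by term.
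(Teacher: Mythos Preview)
Your proposal is correct and follows exactly the paper's own argument: the paper simply observes that when the delayed-feedback adversary refrains from stalling any packet, the antitoken mechanism is never triggered and the bucket regulation reduces to that of the regular $(r,b)$ adversary. Your write-up merely expands this one-line observation with additional detail about the per-edge bucket dynamics.
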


\begin{proof}
We compare the two adversarial models as regulated by a leaky bucket of tokens.
When the  delayed-feedback adversary of the type $(r,b,\delta)$ does not induce any stalling among the packets, then the regulatory properties of a bucket of tokens determine the regular adversary of the type $(r,b)$.
\end{proof}

%: theorem: unstable policies

\begin{theorem}
\label{thm:from-regular-to-delayed}

If a scheduling policy is unstable in a network $G$ under a scheduling policy $\cal S$ against the regular adversary of type~$(r,b)$ then this same scheduling policy~$\cal S$ is unstable in the network $G$ against a delayed-feedback adversary of the type~$(r,b,\delta)$, for any positive integer~$\delta$.
\end{theorem}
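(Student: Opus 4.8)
The plan is to reduce instability under the delayed-feedback adversary to the hypothesized instability under the regular adversary, exploiting the simulation already established in Lemma~\ref{lem:comparing-adversarial-models}. First I would unwind the definition of instability: the hypothesis that $\mathcal{S}$ is unstable in $G$ against the regular adversary of type $(r,b)$ means there is a specific adversarial strategy, that is, a round-by-round injection pattern $\mathcal{I}$ respecting the $(r,b)$ leaky-bucket constraints, whose induced execution under $\mathcal{S}$ has the property that the total number of packets present in the network is unbounded over time.

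Second, I would describe the delayed-feedback strategy that mimics $\mathcal{I}$: the delayed-feedback adversary of type $(r,b,\delta)$ injects exactly the same packets along exactly the same paths and in exactly the same rounds as $\mathcal{I}$, and it never stalls any packet. As observed in the proof of Lemma~\ref{lem:comparing-adversarial-models}, with no stalling no antitokens are ever created, so the antitoken operation of the bucket regulation is never invoked and each bucket $K_e$ evolves according to the increment and injection operations alone, which is precisely the regular $(r,b)$ regulation. Hence every injection that $\mathcal{I}$ performs is legal for the delayed-feedback adversary as well.

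Third, I would argue by induction on the round number $t$ that the two executions coincide: the configuration (queue contents at every edge together with the value of every bucket) after round $t$ under the delayed-feedback adversary equals the configuration after round $t$ under the regular adversary. The base case is immediate, since both start from the empty network with all buckets at~$0$. For the inductive step, because no packet is ever stalled, every scheduled transmission succeeds; thus $\mathcal{S}$, fed identical queue contents and identical newly injected packets, makes identical scheduling decisions and moves the same packets across the same edges. Consequently the per-round packet counts agree, so the total packet count is unbounded in the delayed-feedback execution as well, which is exactly instability of $\mathcal{S}$ against the delayed-feedback adversary of type $(r,b,\delta)$; since $\delta$ played no role in the argument, the conclusion holds for every positive integer~$\delta$.

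I do not expect a genuine obstacle, as the statement is essentially a corollary of Lemma~\ref{lem:comparing-adversarial-models}; the one point that warrants care is verifying that ``no stalling'' really collapses the delayed-feedback dynamics onto the regular dynamics, namely that the absence of antitokens makes the feedback operation vacuous and leaves the bucket arithmetic unchanged, so that both the legality of injections and the determinism of $\mathcal{S}$ transfer cleanly between the two models.
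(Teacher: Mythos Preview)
Your proposal is correct and follows essentially the same approach as the paper: invoke Lemma~\ref{lem:comparing-adversarial-models} by having the delayed-feedback adversary replay the regular adversary's injection pattern without ever stalling a packet, so that the two executions coincide and instability carries over. The paper's proof is a two-sentence appeal to Lemma~\ref{lem:comparing-adversarial-models}; your version simply spells out the simulation and the round-by-round induction in more detail.
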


\begin{proof}
Consider an unstable execution of routing against the adversary of the type $(r,b)$ when the scheduling policy~$\cal S$ is applied.
A similar unstable execution can be produced by the adversary of the type $(r,b,\delta)$, by Lemma~\ref{lem:comparing-adversarial-models}.
\end{proof}

We consider the following specific scheduling policies: First-In-First-Out (FIFO),  Nearest-To-Go (NTG), Farthest-From-Source (FFS), and Slowest-Previous-Link with ties broken using the Nearest-From-Source policy (SPL-NFS).

%: corollary

\begin{corollary}
\label{corollary:specific-unstable-policies}

Each of the scheduling policies FIFO, NTG, FFS and SPL-NFS is unstable in some network against a delayed-feedback adversary with injection rate less than~$1$.
\end{corollary}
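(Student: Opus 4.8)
The plan is to reduce the corollary to Theorem~\ref{thm:from-regular-to-delayed} together with the known instability constructions from the literature on regular adversarial queuing. Theorem~\ref{thm:from-regular-to-delayed} says that instability of a scheduling policy $\cal S$ against a regular adversary of type $(r,b)$ in a network $G$ transfers, verbatim, to instability of $\cal S$ against a delayed-feedback adversary of type $(r,b,\delta)$ in the same network $G$, for every positive integer $\delta$. Hence it suffices to exhibit, for each of the four policies, a single network and a regular adversary of some type $(r,b)$ with injection rate $r<1$ under which that policy is unstable, and then invoke the theorem with an arbitrary $\delta$.

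For the first three policies this is immediate from the classical results on greedy contention-resolution protocols. Andrews et al.~\cite{AndrewsAFLLK01} construct networks witnessing that FIFO, NTG, and FFS each fail to be universally stable; that is, for each of them there is a network $G$ and a type $(r,b)$ with $r<1$ producing an execution in which the number of packets in transit grows without bound. I would simply cite these constructions, record the corresponding pair $(r,b)$, and apply Theorem~\ref{thm:from-regular-to-delayed}.

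The remaining case, SPL-NFS, is handled the same way, drawing on the instability construction for this policy in the literature on networks with link slowdowns~\cite{KoukopoulosMS-TCS07,KoukopoulosMS-JPDC07}, which again yields a network and a rate $r<1$ at which the policy is unstable. The one point deserving care here is that the definition of SPL-NFS refers to link speeds, so I would make sure the witnessing instability is expressed (or specialized) as an instability against a regular adversary of type $(r,b)$ with $r<1$, exactly as required by the hypothesis of Theorem~\ref{thm:from-regular-to-delayed}; once that is in hand the conclusion follows as in the other three cases.

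I expect the only genuine obstacle to be this last bookkeeping step, namely matching each named policy to an existing instability construction and confirming that its witnessing injection rate lies strictly below~$1$ in the regular model, since all of the mathematical content is already supplied by Theorem~\ref{thm:from-regular-to-delayed} and no fresh adversarial argument in the delayed-feedback model is needed.
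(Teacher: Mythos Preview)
Your proposal is correct and essentially identical to the paper's proof: invoke Theorem~\ref{thm:from-regular-to-delayed} on the known instability constructions, citing Andrews et al.~\cite{AndrewsAFLLK01} for FIFO, NTG, and FFS. The only difference is the source for the SPL-NFS instability---the paper cites Blesa et al.~\cite{BlesaCFLMSST09} rather than~\cite{KoukopoulosMS-TCS07,KoukopoulosMS-JPDC07}---and the paper does not spell out the bookkeeping you flag about matching the witnessing construction to the regular $(r,b)$ model.
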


\begin{proof}
The argument is based on Theorem~\ref{thm:from-regular-to-delayed}.
We rely on the respective instability results obtained for the regular adversary.
The instability of the scheduling policies FIFO, NTG, and FFS follows from the  instabilities obtained by Andrews et al.~\cite{AndrewsAFLLK01}, and the instability of SPL-NFS follows from the related result given by Blesa et al.~\cite{BlesaCFLMSST09}.
\end{proof}

\section{Properties of the adversarial model with feedback}
\label{sec:implementation}

In this section, we investigate properties of the adversarial model with feedback presented in the previous section. 
A key point of the model is the concept of an antitoken, which represents a stalled packet and thus the need to decrease the bound on the future traffic by one packet to avoid congestion. 
We re-define the adversary by expressing its power in a more analytical way to facilitate the future technical analysis.
More precisely, we describe our adversary in terms of an admissibility condition, which involves a delay function accounting for the impact of each antitoken's annihilation on decreasing the amount of traffic that could be injected.

\subsection{Reformulation of the adversarial model}

In each round, the adversary may inject packets into some of the nodes in the network. 
In order for stability to be achievable in principle, the adversary needs to be somehow restricted.
Such restrictions imposed on the regular leaky-bucket adversary are represented by its adversarial type $(b,r)$, where $b \geq 1$ is a natural number and $r$ satisfies $0 \leq r < 1$. 
For each link, the injection rate~$r$  models an upper bound on the frequency with which packets that eventually need to traverse the link can be injected into the network. 
The burstiness~$b$ represents the maximum number of packets, among those that need to traverse the same link, that the adversary can inject into the network in one round.
The precise interpretation of such a type $(b,r)$ is that in any time interval $\tau$ of length $|\tau|$ the adversary may inject at most $r |\tau|+b$ packets that need to traverse the same edge.
An adversary is free to choose both the source and the destination node for any injected packet. 
The adversary also determines the individual path from the source to the destination that any specific packet needs to traverse.

%: figure 

\begin{figure}[t]
\begin{center}
\includegraphics[width=0.6\textwidth, angle=270]{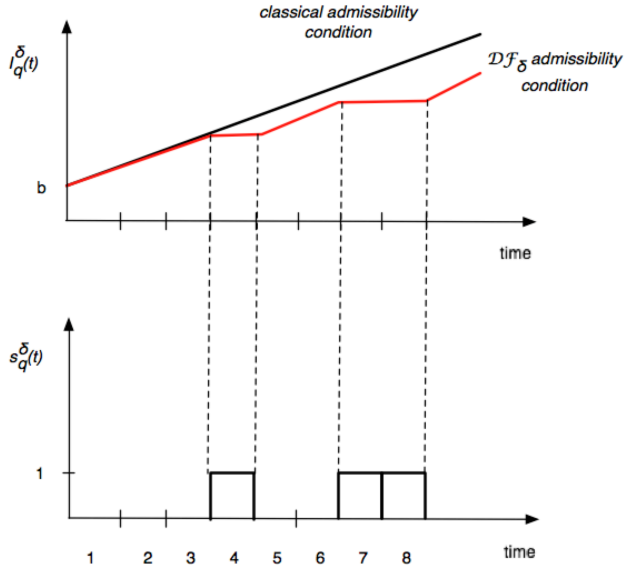}
\FFF\FFF

\parbox{\captionwidth}{
\caption{\label{fig_DF_example}
An illustration for the admissibility condition~\eqref{eqn:admissible}.
The number~$s_q^{\delta}(t)$ represents the rounds when the adversary reacts to stalled packets in $q$. 
The injected data~$I_q^{\delta}(t)$ represents the admissible amount of data that the adversary can inject due to the extra rounds incurred by stalling.}}
\end{center}
\end{figure}

An extension to the adversarial model with delays, which we refer to as $\cal DF_{\delta}$ (\emph{feedback delayed by up to~$\delta$}),  is as follows:

Let $I^{\delta}_{q}(t)$ represent the total number of packets which the adversary injects at round~$t$ that have queue $q$ on their path. 
We say that the packet injections are {\em admissible for rate~$r$ and burstiness~$b$} if the following \emph{admissibility condition} holds for all $q$:
\begin{equation}
\label{eqn:admissible}
\sum_{t \in T} I^{\delta}_{q}(t) \leq r \sum_{t \in T} (1 - s^{\delta}_{q}(t)) + b,
\end{equation}
where $T$ represents a contiguous time interval and $s_q^{\delta}$ is a delay function. A formal specification of $s_q^{\delta}$ is provided in Figure~\ref{figure:specification-of-reactive-function} in Section~\ref{subsec:functions}. 

The function $s_q^{\delta}$ represents the rounds when the adversary becomes constrained by  packets getting stalled in queue~$q$.
We interpret them as the rounds where the adversary ``reacts'' to stalled packets.  
At any round~$t$, the adversary takes into account the values of $s_q^{\delta}(t')$, for $t' \leq t$, based only on the notifications received up to that round, as reflected in Figure~\ref{figure:specification-of-reactive-function}. 
This means that the admissibility condition in Equation~(\ref{eqn:admissible}) refers only to the received notifications and the packets injected into the system. 
Figure~\ref{fig_DF_example} gives an example of the admissibility condition in a queue~$q$. 

Next, we compare the adversarial power in $\cal DF_{\delta}$  with its \emph{matching adversary}  in $(r,b,\delta)$, that is, adversaries with the same respective parameters.

%: proposition: equivalence of adversaries

\begin{proposition}

Any adversary in $\cal DF_{\delta}$  is equivalent to its matching adversary in $(r,b,\delta)$ for all $0 \leq r < 1$ and $b \geq 1$.
\end{proposition}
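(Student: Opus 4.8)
The plan is to read ``equivalent'' as meaning that the two descriptions admit exactly the same families of injection patterns, where a pattern records, for every round and every queue, how many packets are injected, which of them are stalled, and when the feedback for each stalled packet is registered. Since both the per-edge buckets $K_e$ of Section~\ref{sec:model} and the admissibility condition~\eqref{eqn:admissible} are stated independently for each queue, and since a path-injection is legal precisely when it is legal on every edge of the assigned path, it suffices to prove the equivalence for a single fixed queue~$q$ and then take the conjunction over all edges of all paths. The bridge between the two formulations is the identification of $s_q^{\delta}(t)$ with the indicator of those rounds in which an antitoken in $K_q$ is annihilated, i.e.\ the rounds carrying the update $K\leftarrow K-r$ of operation~3). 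First I would verify that the antitoken value mechanism --- initialized to~$\delta$, decremented each round, and forcibly annihilated when it reaches~$0$ --- forces each ``reaction'' to occur within $\delta$ rounds of the stalling that created it, and that this coincides with the causal, at-most-$\delta$-delayed definition of $s_q^{\delta}$ given in Figure~\ref{figure:specification-of-reactive-function}.

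For the direction from the token/antitoken bucket model to $\cal DF_{\delta}$, I would fix a contiguous interval $T$ and sum the bucket updates over $T$. Writing the net change $K(\max T)-K(\min T)$ as the rate contribution $r|T|$, minus the overflow losses (which are nonnegative and therefore only help the inequality), minus the injection withdrawals $\sum_{t\in T} I_q^{\delta}(t)$, minus the annihilation withdrawals $r\sum_{t\in T} s_q^{\delta}(t)$, and then invoking the invariants $K\le b$ (from the overflow cap) and $K\ge 0$ at every injection round, the telescoped inequality rearranges into exactly~\eqref{eqn:admissible}. This is the standard leaky-bucket summation argument, with the antitoken withdrawals supplying the extra $-\,r\sum_{t\in T}s_q^{\delta}(t)$ term.

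For the converse, given injections together with a stalling-and-feedback schedule satisfying~\eqref{eqn:admissible} for every interval $T$, I would construct a legal bucket execution by running the update rules of operations~1)--3), scheduling the annihilation of each antitoken at the round prescribed by $s_q^{\delta}$ (which, by the first step, lies within the $\delta$-window and respects causality), and proving by induction on~$t$ that at every injection round the bucket holds at least the number of tokens being removed, so that the requirement $0<i\le K$ always holds. The inductive step applies~\eqref{eqn:admissible} to the interval running from the last round at which $K_q$ overflowed (or from the start) up to the current round, which bounds the cumulative demand --- injections plus annihilation withdrawals --- by the cumulative token supply and hence certifies availability.

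The hardest part will be this converse (sufficiency) direction: once the annihilation withdrawals $K\leftarrow K-r$ are interleaved with the injections, I must ensure the bucket never runs dry at an injection even though both kinds of demand draw on the same token supply, and that the annihilation times mandated by $s_q^{\delta}$ can be realized without violating either the $\delta$-window or the restriction that only notifications received so far may be used. A secondary subtlety is reconciling the per-queue condition~\eqref{eqn:admissible} with the cross-edge relatedness of the antitokens created along a path in operation~2); I would handle this by observing that relatedness only couples the \emph{creation} of antitokens on the distinct edges still to be traversed, whereas~\eqref{eqn:admissible} constrains each queue separately, so the single-queue argument lifts to paths edge-by-edge.
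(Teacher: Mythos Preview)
Your plan is sound and is noticeably more rigorous than what the paper actually does. The paper's proof is a short sketch: it observes that the two formulations differ only in (i) whether the adversary or the ``underlying system'' controls stalling and feedback timing, which it dismisses as immaterial since adversarial models are worst-case, and (ii) the bucket mechanism versus the admissibility inequality~\eqref{eqn:admissible}. For~(ii) it simply argues by cases on $s_q^{\delta}(t)\in\{0,1\}$: when $s_q^{\delta}(t)=0$ no annihilation happens and one falls back on the standard bucket/admissibility equivalence of Andrews et~al.~\cite{AndrewsAFLLK01}, and when $s_q^{\delta}(t)=1$ both descriptions subtract exactly~$r$ from the budget at round~$t$. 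There is no telescoping sum, no induction, and no explicit construction of a legal bucket run; those are all deferred to the cited reference. Your explicit two-direction argument is the honest way to carry this out and buys an actual proof rather than an appeal to analogy.

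One point you should sharpen: you identify $s_q^{\delta}$ with the indicator of antitoken-annihilation rounds, but the bucket model of Section~\ref{sec:model} allows several antitokens to be annihilated in the same round (each contributing its own $K\leftarrow K-r$), whereas $s_q^{\delta}$ is forced to be $\{0,1\}$-valued by the spreading procedure of Figure~\ref{figure:specification-of-reactive-function}. The correspondence you need is therefore not round-by-round but via a bijection between stall rounds and reaction rounds that stays inside the $\delta$-window; this is exactly the increasing bijection $rD_q^{\delta}$ built inside the proof of Lemma~\ref{lemma:bound2}, and invoking it (or reproving that it lands in $[t,t+\delta]$) would close the gap cleanly. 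The paper's own sketch does not confront this mismatch either, so you are not behind the paper, but since your program aims for rigor it should make this matching explicit.
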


\begin{proof}
What distinguishes the adversary in $\cal DF_{\delta}$ and in $(r,b,\delta)$ is that, on the one hand, while in $\cal DF_{\delta}$ the adversary obtains the feedback (regarding stalled packets) provided by the underlying system, in $(r,b,\delta)$ the adversary has  the power to control any malfunctioning of the network.  A second difference is how the adversary can inject packets into the network. While in $(r,b,\delta)$ that is specified by means of a leaky-bucket mechanism; in  $\cal DF_{\delta}$ that is done by means of an admissibility condition.

Regarding the first difference, we have that these two approaches are equivalent, as the adversarial model is to capture a worst-case behavior of the network.

Regarding the second difference, we have that they are also equivalent. Let us first start by considering the admissibility condition. 
It captures the following intuitions:
\begin{enumerate}
\item
When $s_q^{\delta}(t)=0$: 
This can be interpreted as if there are no annihilations of antitokens at time~$t$ in~$q$. 

In this case, the value of the bucket $K$ is not modified due to some annihilation, as also happens in the admissibility condition. 
Since the regulatory properties of the bucket of tokens are equivalent to the regulatory properties of the admissibility condition~\cite{AndrewsAFLLK01}, we are done.

\item
When $s_q^{\delta}(t)=1$: 
This can be interpreted as if an antitoken has been annihilated at time~$t$ in~$q$. 

In this case, the value of the bucket is reduced by $r$, and since $1-s_q^{\delta}(t) =0$, then the component corresponding to time $t$ in the admissibility condition is equal to $0$.
In other words, there is a  reduction of $r$ in the admissibility condition with respect to the case where $s_q^{\delta}(t)=0$. 
The regulatory properties of the bucket of tokens is equivalent to the regulatory properties of the admissibility condition~\cite{AndrewsAFLLK01}, so we are also done.
\end{enumerate}
The reasoning in the case where we consider the leaky-bucket mechanism is similar: an annihilated antitoken at time $t$ in $q$ can be interpreted as the case where $s_q^{\delta}(t)=1$.
\end{proof}

The definition of stability in $\cal DF_{\delta}$ is similar to the definition stated under other adversarial models.

%: definition

\begin{definition}
Let $G$ be a network, $\cal P$ a scheduling policy and $\cal A$ an adversary of type $(r,b)$. 
Let $\cal D$ be an execution of protocol $\cal P$ against $\cal A$ in $G$. 
For a positive integer $t$, let $Q_{\cal D}(t)$ be the number of packets that are queued in the system at time~$t$. 
Protocol $\cal P$ is \emph{stable} on $G$ against~$\cal A$ if, in each such execution~$\cal D$, all the numbers $Q_{\cal D}(t)$ are bounded. 
Protocol~$\cal P$ is \emph{universally stable} if  it is stable  against any adversary with injection rate $r < 1$ and in any network.
\end{definition}

\subsection{Delay functions and reactive functions}
\label{subsec:functions}

We say that a queue $q$ is \emph{stalled} in round $t$ if  some packet in the queue is stalled in this round.
Next, we introduce the function~$w_q$ which represents the rounds where delays occur at queue~$q$.

%: definition

\begin{definition}
\label{def_c}

Consider an execution of a system up to round $t$. 
Given a queue~$q$, we define the function~$w_q(t)$ such that $w_{q}(t) = 1$  if the queue $q$~is stalled at round~$t$ and $w_{q}(t) = 0$ otherwise. 
\end{definition}

The rounds that are added to a stalled packet's itinerary occur as a side effect of the adversary's actions. 
The adversary receives information about the extra rounds of stalled packets  occurring at the different queues. 
We consider the case where the adversary becomes constrained by the stalled packets after some time delay; the parameter $\delta$ is used to bound such a maximum delay.

Next, we introduce the notations $T_q$, $D_q^{\delta}$, and $w_{q}^{\delta}$.
We want $w_{q}^{\delta}$ to model the rounds where the adversary becomes constrained by the queue $q$ getting stalled and it also provides the number of notifications.

%: definition

\begin{definition}
\label{def:function-D}

We will use the following terminology and notations:
\begin{enumerate}
\item
Let $T_q$ be the set of those rounds~$t$ for which $w_q(t)=1$ holds.

\item
For a given function $w_q$, let the function $D_q^{\delta}: T_q\rightarrow \mN$ be such that $D_q^{\delta}(t) = t'$, for $t \in T_q$ and $t \leq t' \leq t + \delta$, where $t'$ is the time when the feedback about the queue $q$ being stalled at time $t$ arrives.

\item
Let $T_q^{\delta}(t)$ be the subset of $T_q$ such that $t' \in T_q^{\delta}(t)$ if $D_q^{\delta}(t') = t$.

\item
For a given $w_q$ and a given $D_q^{\delta}$, let the \emph{delay function $w_{q}^{\delta}$}  be determined by the equality $ w_{q}^{\delta}(t) = | T_q^{\delta}(t)|$.
\end{enumerate}
\end{definition}

Figure~\ref{fig:functions} provides a graphical representation of the notions introduced in Definition~\ref{def:function-D} in a specific example.

%: figure 

\begin{figure}[t]
\begin{center}
\includegraphics[width=0.6\textwidth]{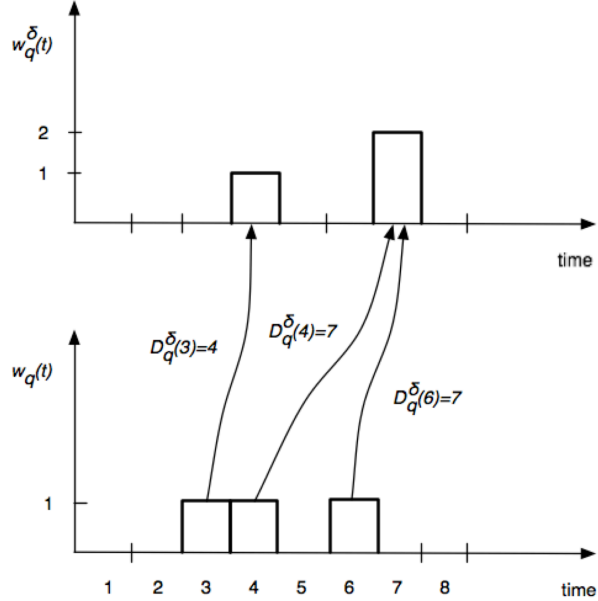}
%\FFF\FFF

\parbox{\captionwidth}{
 \caption{ \label{fig:functions}
 An illustration of how the function $D_q^{\delta}$ is determined.}}
 \end{center}
\end{figure}

%: paragraph: reactive functions

\Paragraph{Reactive functions.}

When the adversary receives a notification of a queue $q$ getting delayed, then this indicates that some packet in this queue will need an extra round. 
In order to maintain stability, the adversary needs to take such an eventuality  into account. 
A reaction can be interpreted as if temporarily  the injection rate were reduced, which is implemented by the mechanism of tokens and antitokens.
Several notifications could be received at the same time, which needs to be reflected in a cumulative reaction. 
For instance, if at some time $t$ the adversary receives three notifications of stalling for some given link, it will reduce the  long term injection rate of a packet that will cross such a link for three rounds after time~$t$, provided such rounds have not been already reduced because of previous notifications, in which case the next ``available'' rounds will be chosen.

%: figure: specification of reactive function

\begin{figure}[t]
\rule{\textwidth}{0.75pt}
\begin{center}
\begin{minipage}{\pagewidth}
\begin{tabbing}
aA\=aA\=aA\=aA\=aaaaaaaaaaaaaA\=aA\= \kill

{\bf initialization:} \\
\\
\> $s_q^{\delta}(t) \leftarrow 0$ for all $t$; \\
 \> $t \leftarrow 1$; \>\>\>\> \emph{\tt \% time in $w_q^{\delta}$} \\
 \>$t' \leftarrow 1$; \>\>\>\> \emph{\tt \% time in $s_q^{\delta}$} \\ 
 \\
 
{\bf repeat forever} \\
\\
\> {\bf if} $w_q^{\delta}(t) \neq 0$ {\bf then} \>\>\>\>  \emph{\tt \% when one or more notifications are} \\
\>\>\>\>\>\> \emph{\tt received at time step $t$} \\ 
\\
\>\> $t' \leftarrow \max(t,t')$; \>\>\> \emph{\tt \% set up the next "available" round} \\
\>\>\>\>\>\> \emph{\tt in $s_q^{\delta}$ after $t$} \\  \\
\> \>  {\bf for} $t_{aux}= t'$ {\bf to} ($t' + w_q^{\delta}(t) -1$) {\bf do} \\ 
\\
\> \> \> \> $s_q^{\delta}(t_{aux}) \leftarrow 1$; \> \emph{\tt \% mark $t_{aux}$ as "reactive"} \\
\\
\>\> $t' \leftarrow t' + w_q^{\delta}(t) -1$; \>\>\> \emph{\tt \% set up the next "available" round in $s_q^{\delta}$} \\
\\
\> $t \leftarrow t + 1$; \>\>\>\>  \emph{\tt \% increment the time step}
\end{tabbing}
\end{minipage}
\FFF

\rule{\textwidth}{0.75pt}
\FF

\parbox{\captionwidth}{
\caption{\label{figure:specification-of-reactive-function}
A specification of the reactive function~$s_q^{\delta}$.
It is determined by a given delayed function $w_q^{\delta}$.
}}
\end{center}
\end{figure}

To formalize this, we specify the reactive function $s_q^{\delta}$ in Figure~\ref{figure:specification-of-reactive-function}.
It is intended to model the rounds when the adversary will react  to a delayed notification of stalling  at a queue~$q$.
This function~$s_q^{\delta}$ is determined by~$w_q^{\delta}$, which gives the rounds when the adversary receives notifications of delays occurring at queue~$q$, as represented by annihilations of antitokens.

\section{Stability  of scheduling policies }
\label{sec:stability}

In this section, we show that some scheduling policies are stable in the adversarial-queuing model~$\cal DF_{\delta}$. 
We will use an auxiliary model, known as the \emph{priority model}, which was introduced in~\cite{AlvarezBDSF05}. 
We refer to the model as a \emph{$c$-priority model} when there are $c$ priorities.
It is obtained by modifying the regular adversarial model~\cite{AndrewsAFLLK01,BorodinKRSW01} so that packets have priorities in the following sense. 
If, at a certain time, more than one packet located at the same queue is ready to be transmitted, then the scheduling policy chooses the packet of the highest priority. 

The following fact provides a relationship between the number of extra rounds due to delays and the number of reactive rounds at a given time interval.

%: lemma: bound two

\begin{lemma}
\label{lemma:bound2}
$\sum_{t \in T} w_{q}(t) \leq \sum_{t \in T} s^{\delta}_{q}(t) + \delta$.
\end{lemma}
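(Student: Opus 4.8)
The plan is to pass to cumulative counts and prove a single clean one-sided inequality. Write $W(t)=\sum_{u\le t} w_q(u)$, $N(t)=\sum_{u\le t} w_q^{\delta}(u)$ and $S(t)=\sum_{u\le t} s_q^{\delta}(u)$ for the cumulative numbers of stalled rounds, received notifications, and reactive rounds up to round $t$. For a contiguous interval $T=[a,b]$ the claim reads $W(b)-W(a-1)\le S(b)-S(a-1)+\delta$, and I would deduce it from the two cumulative facts $W(t)-S(t)\le\delta$ and $S(t)\le W(t)$: the first applied at $t=b$ and the second at $t=a-1$ give exactly the interval form. The inequality $S(t)\le W(t)$ is immediate, since the reactive function assigns to each notification one distinct reactive round placed no earlier than that notification, while each notification is tied by $D_q^{\delta}$ to a distinct stalled round occurring no later than it (recall $D_q^{\delta}(u)\ge u$); hence every reactive round counted by $S(t)$ charges to a distinct stalled round counted by $W(t)$. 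All the difficulty is therefore concentrated in $W(t)-S(t)\le\delta$.

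For that, the key device is to anchor the analysis at the last \emph{idle} round. I would regard the reactive function of Figure~\ref{figure:specification-of-reactive-function} as a work-conserving single-server scheduler: the $w_q^{\delta}(t)$ notifications arriving at round $t$ are immediately committed to that many distinct reactive rounds starting at $\max(t,t')$, and no round is left idle while a committed reactive round still lies ahead. Fix $t$ and let $\tau^{\ast}\le t$ be the largest round that is not reactive (take $\tau^{\ast}=0$ if none exists). Work-conservation yields two facts: at round $\tau^{\ast}$ the scheduler has caught up, so $S(\tau^{\ast})=N(\tau^{\ast})$; and every round in $(\tau^{\ast},t]$ is reactive, so $S(t)=N(\tau^{\ast})+(t-\tau^{\ast})$. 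On the stall side, at most one stalled round occurs per round, so $W(t)\le W(\tau^{\ast})+(t-\tau^{\ast})$. Subtracting, the $(t-\tau^{\ast})$ terms cancel and I obtain $W(t)-S(t)\le W(\tau^{\ast})-N(\tau^{\ast})$.

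It then remains to bound the boundary term $W(\tau^{\ast})-N(\tau^{\ast})$, which counts exactly the stalls that have occurred by $\tau^{\ast}$ but whose feedback has not yet been received by $\tau^{\ast}$, that is, the rounds $u\le\tau^{\ast}$ with $D_q^{\delta}(u)>\tau^{\ast}$. Since feedback is delayed by at most $\delta$, each such $u$ satisfies $u>\tau^{\ast}-\delta$, so all of them lie in the window $(\tau^{\ast}-\delta,\tau^{\ast}]$; with at most one stalled round per round there are at most $\delta$ of them. Hence $W(\tau^{\ast})-N(\tau^{\ast})\le\delta$, giving $W(t)-S(t)\le\delta$ and completing the argument.

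I expect the main obstacle to be precisely this interaction between the feedback delay and the scheduler's backlog. A naive accounting charges up to $\delta$ rounds from a stall to its notification and then a further backlog of up to $\delta$ rounds from the notification to its reactive round, producing only the weaker slack $2\delta$. The role of anchoring at the last idle round $\tau^{\ast}$ is that there the backlog is zero by work-conservation, so the backlog contribution vanishes and only the feedback-delay boundary term survives; this is what collapses the bound from $2\delta$ to the claimed $\delta$. A secondary point to verify carefully is that the function of Figure~\ref{figure:specification-of-reactive-function} really does place one distinct reactive round per notification and never idles while work is committed, as this underpins both $S(\tau^{\ast})=N(\tau^{\ast})$ and the charging used for $S(t)\le W(t)$.
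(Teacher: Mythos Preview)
Your proof is correct and takes a genuinely different route from the paper's. The paper constructs an explicit increasing bijection $rD_q^{\delta}$ from stalled rounds to reactive rounds and shows, by a minimal-counterexample (descent) argument, that this bijection shifts each stalled round forward by at most~$\delta$; the interval inequality then follows by counting how many images of stalls in $T=[a,b]$ can land in $(b,b+\delta]$. You instead pass to cumulative counts, reduce the interval statement to the pair of prefix inequalities $S(t)\le W(t)$ and $W(t)-S(t)\le\delta$, and prove the latter by anchoring at the last idle round $\tau^{\ast}$ and invoking work-conservation of the reactive scheduler so that the backlog term vanishes and only the feedback-delay window survives. Both arguments ultimately encode the same structural fact---the composite shift from a stall to its reactive round is at most~$\delta$---but your busy-period formulation makes explicit why the naive $2\delta$ slack (feedback delay plus scheduler backlog) collapses to~$\delta$, and it cleanly separates the two directions needed for the interval form; the paper's bijection argument is shorter but leaves these mechanisms implicit. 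One minor point worth tightening in your write-up is the verification that $S(\tau^{\ast})=N(\tau^{\ast})$: you should note that idleness of $\tau^{\ast}$ forces $w_q^{\delta}(\tau^{\ast})=0$ (else the scheduler would mark $\tau^{\ast}$), so no notification arrives at $\tau^{\ast}$ itself, which is what closes that equality.
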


\begin{proof}
By the definition of function $w^{\delta}_{q}$, for each round $t$ where $w_q(t)=1$, there is a round~$t'$, which is not necessarily different for each~$t$, such that $w^{\delta}_{q}(t')\neq 0$ and $t \leq t' \leq t + \delta$.

From the mechanism used to construct $s^{\delta}_{q}$ and specified in Figure~\ref{figure:specification-of-reactive-function}, the function $s^{\delta}_{q}$ takes the value $1$ as many times as the values taken by the function  $w^{\delta}_{q}$.
It follows that the number of times where $w_{q}$ takes the value $1$ is the same as the number of times where $s^{\delta}_{q}$ takes the value $1$, although not necessarily at the same rounds. 

This means that there exists a bijective increasing function~$rD_q^{\delta}$ such that, for all $t$, when $w_q(t)=1$ then $rD_q^{\delta}(t)=t'$, where $s_q^{\delta}(t')=1$. 

We proceed with considering the following two possible cases.

\noindent
{\sf The case of $t >t'$:}  

This case cannot happen by the construction of $s_q^{\delta}$, because $t_{aux}$ in Figure~\ref{figure:specification-of-reactive-function}  is always at least~$t$.

\noindent
{\sf The case of $t' > t + \delta$:} 

We prove this case by contradiction. Let $t_1$ be the first round such that $rD_q^{\delta}(t_1)=t_1^*$ and $t_1^* > t_1 + \delta$. 
From the construction of $rD_q^{\delta}$ described above and specified in Figure~\ref{figure:specification-of-reactive-function}, we have that  the equality $s_q^{\delta}(t'')=1$ holds  for all $t'' \in [D_q^{\delta}(t_1),rD_q^{\delta}(t_1)-1]$.
Since~$rD_q^{\delta}$ is a bijective increasing function, there must exist some round $t_m$ such that $t_m< t_1$ and $rD^{\delta}(t_m)=t^*_1 -1$. 
Let $t^*_m = t^*_1 - 1$. 

The following two facts hold.
One is that  $t_m < t_1$ and $t^*_m = t^*_1 -1$, which means $t_m^*  > t_m + \delta$. 
The other is that $rD^{\delta}(t_m)=t^*_m$. 
All this contradicts our assumption that $t_1$ is the first round such that $rD_q^{\delta}(t_1)=t_1^*$ and $t_1^* > t_1 + \delta$.
The fact that the inequality $t \leq t' \leq t + \delta$ holds means that for each $t$ such that $w_{q}(t)=1$, the corresponding image in $s^{\delta}_{q}$ will be for a round $t'$ that is delayed by at most $\delta$ rounds. 
We conclude that the inequality $\sum_{t \in T} w_{q}(t) \leq \sum_{t \in T} s^{\delta}_{q}(t) + \delta$ holds. 
\end{proof}

The following Lemma~\ref{ack:simulation1}  shows that if a given scheduling policy is unstable in~$\cal DF_{\delta}$ then it is also unstable in the $2$-priority model.

%: lemma: simulation one

\begin{lemma}
\label{ack:simulation1}

If a given scheduling policy is unstable against an adversary with injection rate~$r$ and burstiness~$b$ in~$\cal DF_{\delta}$, then such a scheduling policy is unstable against some adversary of  injection rate $r'$ and burstiness $b'$ in the $2$-priority model, where $0<r'<1$.
\end{lemma}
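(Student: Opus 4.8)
The plan is to prove the contrapositive-style simulation directly: take an unstable execution in $\mathcal{DF}_\delta$ and reproduce it, round for round, in the $2$-priority model. Concretely, suppose the policy $\cal S$ admits an execution $E$ against a $\mathcal{DF}_\delta$ adversary of type $(r,b)$ in which some queue grows without bound. I would build an execution $E'$ of $\cal S$ on the same network $G$ as follows: inject every packet of $E$ as a \emph{low-priority} packet, at the same round and along the same path; and, for every edge $e$ and every round $t$ in which $e$ is stalled in $E$ (that is, $w_e(t)=1$ in the sense of Definition~\ref{def_c}), inject one \emph{high-priority} ``blocker'' packet at the tail of $e$ whose assigned path is the single edge $e$. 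A blocker is injected and absorbed within one round, and its sole purpose is to seize the link $e$ in that round.

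The second step is to verify, by induction on the round number, that $E'$ faithfully simulates $E$ on the low-priority packets: at the end of every round the multiset of low-priority packets queued in $E'$ coincides with the multiset of packets queued in $E$. The crucial case is a round in which $e$ is stalled in $E$: there the blocker, having the higher priority, is the packet that $\cal S$ forwards across $e$, so no low-priority packet crosses $e$, exactly mirroring the failed transmission in $E$. Since a queue can be stalled only if it holds a packet, whenever $w_e(t)=1$ the inductive hypothesis guarantees a low-priority packet waiting at $e$ in $E'$, so the blocking is well defined. In every non-stalled round no blocker is present at $e$, so $\cal S$ ranks and forwards the low-priority packets exactly as in $E$. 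Consequently the unbounded queue of $E$ reappears as an unbounded low-priority queue in $E'$, and $\cal S$ is unstable in $E'$.

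It remains to check that $E'$ is generated by a legitimate $2$-priority adversary of some type $(r',b')$ with $0<r'<1$; this is the heart of the argument, and Lemma~\ref{lemma:bound2} is the key tool. The low-priority injections are those of $E$, so by the admissibility condition~\eqref{eqn:admissible} they satisfy $\sum_{t\in T} I^{\delta}_e(t)\le r\sum_{t\in T}(1-s^{\delta}_e(t))+b$ for every edge $e$ and interval $T$. The high-priority packets crossing $e$ in $T$ number exactly $\sum_{t\in T} w_e(t)$, which Lemma~\ref{lemma:bound2} bounds by $\sum_{t\in T} s^{\delta}_e(t)+\delta$. Writing $n=|T|$ and $m=\sum_{t\in T}s^{\delta}_e(t)$ and adding the two estimates, the total number of packets injected in $T$ that must traverse $e$ is at most $r(n-m)+m+b+\delta=rn+(1-r)m+b+\delta$. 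The role of Lemma~\ref{lemma:bound2} is precisely to convert the blocker count into the reactive term $m$, which then telescopes against the factor $(1-s^{\delta}_e)$ in~\eqref{eqn:admissible}, so that the blockers ``pay for themselves'' in the injection budget.

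The main obstacle is turning this per-edge bound into a rate strictly below $1$. Since $m\le n$, the displayed estimate already exhibits a regular adversary of rate $1$ and burstiness $b+\delta$; to obtain $r'<1$ one must bound the reactive rounds strictly away from $n$, i.e.\ show $m\le\sigma n+O(1)$ for some $\sigma<1$, whence $r'=r+(1-r)\sigma<1$ with $b'=b+\delta+O(1)$. This is exactly where the transient nature of the failures enters: because the base model $\mathcal{DF}_\delta$ captures only transient stalling, permanent link faults being treated separately as an extension, the fraction of stalled --- hence, via Lemma~\ref{lemma:bound2}, of reactive --- rounds at any edge stays bounded below $1$ over long intervals. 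I would isolate this sub-unit stalling rate as the explicit quantitative input for the execution $E$ at hand, and then read off the admissible pair $(r',b')$ directly from the displayed per-edge bound.
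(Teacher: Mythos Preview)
Your construction and argument coincide with the paper's: inject the original packets as low priority, drop a one-hop high-priority blocker at every stalled edge-round, verify by induction that low-priority traffic replays the $\mathcal{DF}_\delta$ execution, and combine Lemma~\ref{lemma:bound2} with~\eqref{eqn:admissible} to bound the total per-edge load by $rn+(1-r)\sum_{t\in T} w_q(t)+b+r\delta$. For the final step you correctly anticipate the obstacle; the paper closes it by positing a parameter~$\tau$ with $\sum_{t\in T} w_q(t)\le \frac{\tau}{\tau+1}\,|T|$, which yields the explicit rate $r'=\frac{r+\tau}{\tau+1}<1$ and burstiness $b'=b+r\delta$ --- precisely the ``sub-unit stalling rate'' you were looking to isolate.
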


\begin{proof} 
Let us take an adversary $\cal A$ in $\cal DF_{\delta}$ with parameters $(r,b)$. 
Then, according to the admissibility condition in Equation~\eqref{eqn:admissible} and by Lemma~\ref{lemma:bound2}, the following estimates hold:
\begin{eqnarray*}
\sum_{t \in T} I^{\delta}_{q}(t) & \leq & r \sum_{t \in T} (1 - s^{\delta}_{q}(t)) + b  \\
& \leq & r \sum_{t \in T} 1 - (\sum_{t \in T} w_{q}(t) - \delta)) + b \\
& \leq & r \sum_{t \in T} (1 -  w_{q}(t)) + r \delta + b  
\ .
\end{eqnarray*}
The right-hand side of this bound equals~$r  \sum_{t \in T} (1 - w_{q}(t)) + b'$ for $b' = r  \delta +b$.
We obtain by  algebraic manipulations the following bound:
\begin{eqnarray*}
\sum_{t \in T} I^{\delta}_{q}(t) + \sum_{t \in T}  w_{q}(t) & \leq & r  \sum_{t \in T} (1 - w_{q}(t))  + b' + \sum_{t \in T}  w_{q}(t) \\
& = &
r  \sum_{t \in T} 1 + (1 -r)  \sum_{t \in T}  w_{q}(t) + b'  \\
& \leq &
r \sum_{t \in T} 1 + \big((1 -r) \sum_{t \in T}  \frac{\tau}{\tau +1}\bigr) + b' \\
& = &
\bigl(\frac{r + \tau}{\tau +1}\bigr)  \sum_{t \in T} 1 + b'  \\
& = &
\frac{r + \tau}{\tau +1} \mid T \mid + \; b'  \\
& = &
r'   \mid T \mid + \; b'  
\ .
\end{eqnarray*}
In this derivation, we used $w_{q}(t) \leq \frac{\tau}{\tau +1}$ and $r' = \frac{r + \tau}{\tau +1}$.
Observe that $0 < r' < 1$.

Consider an adversarial pattern for injection rate $r$ and burstiness $b$ that results in unstable execution of the given scheduling policy.
Based on the obtained estimate on 
\[
\sum_{t \in T} I^{\delta}_{q}(t) + \sum_{t \in T}  w_{q}(t)\ , 
\]
we define the corresponding adversarial pattern in the $2$-priority model, as specified in the claim of the lemma. 
The constructed specific adversarial behavior follows the same injection pattern as defined by the adversary ${\cal A}$ in $\cal DF_{\delta}$, with a low priority given to all these packets, and additionally it injects a high priority packet at the starting queue $q$ in each round $t$ such that $w_{q}(t)=1$.

Consider the execution of the original scheduling policy under the defined adversarial pattern in the $2$-priority adversarial model. 
By the inequality 
\[
\sum_{t \in T} I^{\delta}_{q}(t) + \sum_{t \in T}  w_{q}(t)  \leq r'   | T | + \; b'\ ,
\]
which holds in the execution in $\cal DF_{\delta}$, we obtain that  queue-congestion of the injected packets, whether of a high or low priority, is constrained by the injection rate~$r'$, with $r' < 1$, and burstiness~$b_v'$. This is because $\sum_{t \in T} I^{\delta}_{q}(t) $ from the original execution in  $\cal DF_{\delta}$ corresponds to the node congestion of the low-priority packets, and $\sum_{t \in T}  w_{q}(t)$ corresponds to the queue-congestion of the high-priority packets, both in the $2$-priority execution.

It remains to argue that the newly defined execution is also unstable in the  $2$-priority model.
We observe that the following invariant holds:
\begin{quote}
\textsf{Invariant:}

There is at most one high-priority packet at a queue in any round in the $2$-priority execution, and 
the transmissions of low-priority packets are the same in both considered executions.
\end{quote}
This fact follows by induction on the round numbers.
Any such a packet is injected in the beginning of each round when a extra round due to stalling occurs in the  execution in $\cal DF_{\delta}$ through some queue. 
Since the execution in $\cal DF_{\delta}$ results in unbounded queues, the other one also does.
\end{proof} 

%: theorem: 2-priority universal stability

\begin{theorem}
\label{ackthm:2-priorities}

Any  scheduling policy that is universally stable in the $2$-priority model is universally stable in $\cal DF_{\delta}$.
\end{theorem}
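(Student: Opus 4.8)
The plan is to prove the theorem by contraposition, reducing it in one step to Lemma~\ref{ack:simulation1}. Recall that a policy is universally stable in a given model when it is stable against every adversary of injection rate $r<1$ on every network. Hence it suffices to show that if a scheduling policy $\cal P$ fails to be universally stable in $\cal DF_{\delta}$, then it fails to be universally stable in the $2$-priority model.

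First I would unwind the negation of universal stability in $\cal DF_{\delta}$: there exist a network $G$, a burstiness $b$, and an injection rate $r<1$ such that $\cal P$ admits an unstable execution against the matching adversary in $\cal DF_{\delta}$. This places us precisely in the hypothesis of Lemma~\ref{ack:simulation1}. Invoking that lemma, I obtain on the same network $G$ an adversary of injection rate $r'$ and burstiness $b'$ in the $2$-priority model against which $\cal P$ is unstable. The essential quantitative point, already supplied inside the lemma, is that the derived rate satisfies $0<r'<1$; concretely $r'=\frac{r+\tau}{\tau+1}$ and $b'=r\delta+b$, with the low-priority stream carrying the original injections $I_q^{\delta}$ and the high-priority stream carrying one packet in each stalled round recorded by $w_q$.

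The existence of even one such destabilizing adversary on one network negates universal stability of $\cal P$ in the $2$-priority model, which is exactly the conclusion needed for the contrapositive. Thus negating universal stability in $\cal DF_{\delta}$ yields the lemma's hypothesis, and the lemma's conclusion negates universal stability in the $2$-priority model, completing the argument and hence the theorem.

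I do not expect a genuine obstacle here, since all the technical work is carried by Lemma~\ref{ack:simulation1} (and, within it, by the rate bookkeeping of Lemma~\ref{lemma:bound2}). The only point demanding care is definitional alignment: one must confirm that the bound $r'<1$ is what distinguishes a legitimate counterexample to universal stability from a trivial saturating adversary, and that the single concrete $(G,r',b')$ produced by the lemma is enough to refute the universal quantifier ``for every adversary with $r<1$ and every network'' in the definition of universal stability. Once this matching of quantifiers is made explicit, the theorem follows immediately.
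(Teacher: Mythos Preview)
Your proposal is correct and follows essentially the same approach as the paper: a contrapositive (equivalently, proof by contradiction) that invokes Lemma~\ref{ack:simulation1} on a single witnessing unstable execution to produce an unstable execution in the $2$-priority model with rate $r'<1$, thereby violating universal stability there. The only differences are cosmetic---you phrase it as contraposition rather than contradiction and you quote the explicit formulas for $r'$ and $b'$ from inside the lemma's proof---but the logical skeleton is identical.
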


\begin{proof}
Let us suppose it is otherwise, in order to arrive at a contradiction.
This means that there is a scheduling policy $S$ that is universal in the  $2$-priority model but for any burstiness~$b$ there is some rate $r$ such that the inequalities $0<r<1$ hold and such that $S$ is unstable against the adversary with rate $r$ and burstiness~$b$ in $\cal DF_{\delta}$. 
Let us consider such an unstable execution. 
By Lemma~\ref{ack:simulation1}, there is an unstable execution of the scheduling policy~$S$ in the $2$-priority model, for some injection rate $r'$ such that $0<r'<1$ and for burstiness~$b'$. This contradicts the universal stability of~$S$ in the $2$-priority model. 
\end{proof}

\`{A}lvarez et al.~\cite{AlvarezBDSF05} showed that the scheduling policies Farthest-To-Go  (FTG), Nearest-From-Source  (NFS) and Shortest-In-System  (SIS) are universally stable for the $2$-priority model. 
By this and Theorem~\ref{ackthm:2-priorities}, we obtain the following corollary.

%: corollary: stable policies

\begin{corollary}
\label{cor:stable_policies}
Scheduling policies FTG, NFS and SIS are all universally stable in the adversarial model $\cal DF_{\delta}$.
\end{corollary}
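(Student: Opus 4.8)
The plan is to derive the corollary as an immediate composition of two facts: the cited universal-stability result for the $2$-priority model, and Theorem~\ref{ackthm:2-priorities}. Theorem~\ref{ackthm:2-priorities} already establishes the general transfer principle that any scheduling policy universally stable in the $2$-priority model is universally stable in $\cal DF_{\delta}$; consequently no new argument specific to the dynamics of $\cal DF_{\delta}$ is required here. The entire task reduces to verifying that FTG, NFS and SIS satisfy the hypothesis of that theorem, i.e.\ that each is universally stable in the $2$-priority model, and then invoking the theorem.

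First I would cite the result of \`{A}lvarez et al.~\cite{AlvarezBDSF05}, which states exactly that FTG, NFS and SIS are each universally stable in the $2$-priority model. This discharges the hypothesis of Theorem~\ref{ackthm:2-priorities} for all three policies at once. Next I would apply Theorem~\ref{ackthm:2-priorities} to each of FTG, NFS and SIS in turn, concluding that each is universally stable in $\cal DF_{\delta}$. Since $\delta$ is an arbitrary fixed positive integer throughout, the conclusion holds uniformly over every feedback delay, and the corollary follows.

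The main point to watch is definitional rather than computational: one must confirm that the notion of universal stability established in~\cite{AlvarezBDSF05} for the $2$-priority model is precisely the one consumed as the hypothesis of Theorem~\ref{ackthm:2-priorities}, namely stability against every adversary of injection rate $r<1$ on every network. As both formulations quantify over all networks and all sub-unit injection rates, I expect this alignment to be immediate. All genuine technical weight has already been absorbed into Theorem~\ref{ackthm:2-priorities}, and in turn into the simulation argument of Lemma~\ref{ack:simulation1}, so the corollary itself needs no further estimates and no obstacle of substance remains.
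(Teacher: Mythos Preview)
Your proposal is correct and follows exactly the same approach as the paper: cite the universal stability of FTG, NFS and SIS in the $2$-priority model from \`{A}lvarez et al.~\cite{AlvarezBDSF05}, then apply Theorem~\ref{ackthm:2-priorities}. The paper's own justification is even terser, so nothing further is needed.
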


\section{Permanent Failures}
\label{sec:permanent}

In the previous sections it has been assumed that failures occur only in a transient manner. 
In this section, we extend the delayed feedback adversarial model ($\cal DF_{\delta}$) to  incorporate permanent failures into the model of packet injection. 
We denote such an extended  model as $\cal DF^{\it e}_{\delta}$.

%: paragraph: scenario for failures

\Paragraph{A scenario of failures.}

In $\cal DF^{\it e}_{\delta}$ we consider the case where, in addition to temporary failures, links may also fail in a permanent fail-stop manner, while nodes are not prone to faults. 
We assume that packets are never lost. 

When a link $q$ permanently fails, the adversary will be aware of that in a finite amount of time $\tau'$. That can be modeled by using a notification of the form $\mathit{fail}_q(t)$ to be sent to the adversary when the link fails, which will receive it in at most $\tau'$ rounds.

We remark that $\cal DF^{\it e}_{\delta}$ allows scenarios where both permanent and temporal link failures occur simultaneously. 
Furthermore, a temporary faulty link could also fail in a permanent manner.
For instance, if the link undergoes a transient fault for $\tau$ consecutive rounds, then the failure could be assumed to be permanent, in the sense that the link stays faulty forever.

%: paragraph: adversarial model

\Paragraph{An adversarial model.}

In addition to the behavior of the adversary in $\cal DF_{\delta}$, in $\cal DF^{\it e}_{\delta}$ the adversary also reacts to occurrences of permanent link failures. 
This is specified as follows:

\begin{itemize}
\item
Once the adversary receives the notification of a permanent failure, it stops injecting new packets to traverse the respective faulty link. 
Starting from such a notification, the adversary must choose such routes for  injected packets that do not include any permanently faulty link.

\item
It could happen that packets injected prior to a permanent failure notification haven't crossed the failed link. 
In order to make it possible for these packets to reach their destinations, when a packet reaches a node such that the outgoing link to traverse is faulty, the adversary simply re-routes each such a  packets using a new simple path.
Note that in order to re-route a packet, it is necessary that the new path be able to reach the destination, which could impose some restrictions on the failure pattern.
While the new path will not use the same link more than once, it could use links that have been previously used in the original path. 

Furthermore, when the adversary re-routes a packet, it also \emph{updates} the packet with the new path; namely, the concatenation of its original path up to the failed link, and the new path. 
\end{itemize}

Let us  make two observations at this point:
First, at each queue, the scheduling policy will treat in the same way all packets, regardless of whether they are re-routed packets or not.
Second, since re-routed packets are already inside the system, when the adversary re-routes a packet it does not take into account any  additional admissibility condition, other than choosing a simple path.

%: lemma: maximum number of re-routed packets

\begin{lemma}
\label{lem:bou}

The maximum number of re-routed packets in $\cal DF^{\it e}_{\delta}$ is bounded, provided the used scheduling policy is universally stable in $\cal DF_{\delta}$.
\end{lemma}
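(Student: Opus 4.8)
The plan is to argue by induction on the permanent link failures, taken in the temporal order in which they are notified. Two facts drive the argument: the network is finite, and the policy is universally stable in $\cal DF_{\delta}$. Since $G$ has only finitely many edges and each link fails permanently at most once, there are at most $|E|$ permanent failures in the whole execution; I would enumerate them as $q_1,\dots,q_m$ with $m\le |E|$, ordered by the round at which the adversary receives the notification $\mathit{fail}_{q_i}$. The inductive invariant I maintain is that, up to the $i$-th notification, the number of packets simultaneously present in the system is bounded and the total number of re-routings performed so far is bounded.

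For the base case, before the first notification the execution is indistinguishable from a plain $\cal DF_{\delta}$ execution (no link is yet known to be down, so no packet is re-routed), hence queue sizes are bounded by universal stability and no re-routing has occurred. For the inductive step at $q_i$, I would bound the number of packets re-routed because of $q_i$. Only packets carrying $q_i$ on their remaining path can be affected, and by the model every packet injected after the notification avoids $q_i$; therefore the re-routed packets are a subset of those present in the system at the failure round that still have $q_i$ ahead of them. Their number is bounded by the inductive hypothesis on queue sizes, the notification delay contributing at most a further $r\tau' + b$ such packets that the adversary may still inject through $q_i$ before it stops (this is just the $\cal DF_{\delta}$ admissibility condition over an interval of length $\tau'$). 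Both quantities are bounded, so $q_i$ triggers only a bounded number $N_i$ of re-routings.

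The main obstacle, and the step needing the most care, is re-establishing stability after re-routing so that the induction closes: re-routed packets are placed on fresh paths with no admissibility check and could a priori destabilize the system. I would absorb them by noting that they form a batch of bounded size $N_i$, so over any time window their contribution to the injection count of any single link is at most $N_i$; hence the combined stream — the ordinary $\cal DF_{\delta}$-admissible injections, now confined to paths avoiding $q_1,\dots,q_i$, together with the re-routed batch — is again admissible for the same rate $r<1$ and an increased but finite burstiness $b+N_i$. Because the policy is universally stable in $\cal DF_{\delta}$, i.e.\ stable for every rate below $1$ on every network, in particular on the network with the failed links deleted, the queues stay bounded and the invariant is re-established. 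Summing the bounded per-failure contributions over the at most $|E|$ failures, and observing that a packet is never re-routed twice by the same link and so is re-routed at most $|E|$ times, yields a finite bound on the total number of re-routed packets, as claimed.
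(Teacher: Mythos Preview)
Your argument is correct and runs along the same line as the paper's---universal stability in $\cal DF_{\delta}$ bounds the packets present when a link fails, the finite notification delay~$\tau'$ bounds the further packets injected through that link, and finiteness of~$E$ bounds the total---but you supply an explicit induction on the failures that the paper's three-sentence proof leaves implicit. The substantive addition is your closure step: after each failure you re-establish stability by folding the finite batch of $N_i$ re-routed packets into an enlarged burstiness $b+N_i$, so that universal stability can legitimately be invoked again at the next failure. The paper's proof, read literally, appeals to bounded queues in the $\cal DF^{\it e}_{\delta}$ execution directly from universal stability in $\cal DF_{\delta}$, which is only immediate before any re-routing has taken place; for later failures one needs exactly the absorption argument you give. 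In effect you prove Lemma~\ref{lem:bou} and the core of Theorem~\ref{the:per} in one induction, whereas the paper separates them and defers the burstiness-absorption idea to the proof of Theorem~\ref{the:per}. Your version is longer but self-contained.
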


\begin{proof}
On the one hand, since the scheduling policy is universally stable in $\cal DF_{\delta}$, when a link fails in a permanent manner after failing in a temporary manner the number of waiting packets (at the queue of that link) that will be re-routed is guaranteed to be bounded.
On the other hand, the adversary will stop injecting packets to pass through such a link after receiving the corresponding notification, which will occur in a bounded number of rounds. Then, the number of new injected packets that will be re-routed because of that failure is also bounded. 
Since the number of links is bounded, the overall number of  packets that will be re-routed is bounded.
\end{proof}

Now, we present our main result regarding permanent link failures in $\cal DF^{\it e}_{\delta}$.

%: theorem: two indices in DF

\begin{theorem}
\label{the:per}

Any universally stable scheduling policy in $\cal DF_{\delta}$ remains universally stable in~$\cal DF^{\it e}_{\delta}$.
\end{theorem}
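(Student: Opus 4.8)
The plan is to reduce an arbitrary execution in $\cal DF^{\it e}_{\delta}$ to an execution in the transient-failure model $\cal DF_{\delta}$ on a fixed subnetwork, and then invoke the assumed universal stability directly. Fix a network~$G$, a scheduling policy~$\cal P$ that is universally stable in $\cal DF_{\delta}$, and an adversary in $\cal DF^{\it e}_{\delta}$ of rate $r<1$ and burstiness~$b$. Since~$G$ has finitely many links and each link fails permanently at most once, only finitely many permanent failures ever occur; let $t_{\ast}$ be the round of the last one and put $T_0 = t_{\ast} + \tau'$, a round by which every permanent-failure notification has been received. Let~$F$ be the set of permanently failed links and $G' = G \setminus F$. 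Over rounds $1,\dots,T_0$ only finitely many packets are injected, by the admissibility condition, so all queue sizes in this prefix are bounded and the number of packets present at round~$T_0$ is some finite constant~$Q_0$.

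The core step is to argue that the suffix of the execution, from round~$T_0$ on, is a legitimate $\cal DF_{\delta}$ execution on~$G'$ with rate~$r$ and a finite burstiness. After~$T_0$ the adversary injects only packets whose paths avoid~$F$, and these regular injections obey admissibility with rate~$r$ and burstiness~$b$ on~$G'$, while transient failures on the links of~$G'$ are treated exactly as in $\cal DF_{\delta}$. There are two sources of nonstandard traffic. First, the $Q_0$ packets present at~$T_0$: I would model each as a fresh $\cal DF_{\delta}$ injection made at round~$T_0$, placed at its current node and carrying its remaining path, which together form one burst of size at most~$Q_0$. Second, the re-routed packets: by Lemma~\ref{lem:bou} their total number over the whole execution is bounded by a constant~$M$, and I would model each as a fresh injection at the node and round where it is re-routed, carrying its new ($F$-avoiding) path. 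Because at most~$M$ of these ever exist, they add at most~$M$ packets crossing any given link over any time window. Hence the combined injection stream satisfies admissibility on~$G'$ with the same rate $r<1$ and augmented burstiness $b' = b + Q_0 + M$.

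With this reduction established, universal stability of~$\cal P$ in $\cal DF_{\delta}$, applied to the network~$G'$ with rate $r<1$ and burstiness~$b'$, bounds all queue sizes from round~$T_0$ onward. Combined with the bounded prefix, this shows that $Q_{\cal D}(t)$ is bounded for every~$t$, which is exactly universal stability of~$\cal P$ in $\cal DF^{\it e}_{\delta}$.

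I expect the main obstacle to be the core step, namely certifying that the re-routed and already-present packets enter the admissibility condition as a bounded increase of the burstiness rather than of the rate. This rests entirely on their number being a constant independent of the time horizon: finiteness of the prefix for the present packets, and Lemma~\ref{lem:bou} for the re-routed ones, the latter itself depending on the universal stability hypothesis. A single re-routed packet left unaccounted over an unbounded horizon would corrupt the rate bound, so the uniform bound~$M$ is what makes the reduction go through. A secondary subtlety to check is that a present packet whose remaining path still contains an $F$-link is itself re-routed before crossing it, and is therefore already included in the count~$M$.
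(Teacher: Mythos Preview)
Your argument is correct and hinges on the same key ingredient as the paper, namely Lemma~\ref{lem:bou}: the total number of re-routed packets is finite, so their contribution can be absorbed into an enlarged burstiness before invoking universal stability in $\cal DF_{\delta}$. The route to the reduction differs, however. The paper keeps the full network~$G$ and handles re-routed packets by attaching to each of their transmissions a fictitious stall notification that the adversary is permitted to ignore; since Lemma~\ref{lem:bou} bounds the number of such notifications, ignoring them amounts to enlarging the burst by a constant, and the whole run is then read as a $\cal DF_{\delta}$ execution on~$G$ (with permanent failures viewed as never-ending transient ones). You instead split time at the round~$T_0$ by which every permanent-failure notification has arrived, pass to the residual subnetwork $G'=G\setminus F$, and count the finitely many packets present at~$T_0$ together with the at most~$M$ re-routed packets as an additive burst $Q_0+M$. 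Your decomposition makes the constants explicit and isolates the finite prefix cleanly, at the price of having to verify that packets whose remaining paths still meet~$F$ are consistently handled in the $G'$ simulation (the subtlety you flag at the end). The paper's device avoids the subnetwork passage and the prefix bookkeeping, but is correspondingly terse about why the resulting run is a genuine $\cal DF_{\delta}$ execution.
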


\begin{proof} 
Let us consider an arbitrary network and consider an adversarial system execution in $\cal DF^{\it e}_{\delta}$ using an universally stable scheduling policy in $\cal DF_{\delta}$. Let us also consider that, in addition to the original notifications of delays, when a re-routed packet is transmitted at some time $t$ at given queue $q$, a notification $w_q(t)$ is sent to the adversary.
This means  we consider the transmission of re-routed packets as failures.

Suppose the adversary does nothing upon the reception of the notifications for re-routed packets. 
This means that, at each round when the adversary receives a notification, the admissibility condition will allow the adversary to inject one more packet compared with the execution when the adversary reacts to these notifications.

This execution will be equivalent to an execution in $\cal DF_{\delta}$ where the adversary has a larger burst. 
Since by Lemma~\ref{lem:bou} the number of these notifications is bounded, then the increase in that burst is also bounded. Therefore, since this execution is stable in $\cal DF_{\delta}$, it will also be stable in $\cal DF^{\it e}_{\delta}$. 
\end{proof}

%: paragraph: rerouting and temporary failures

\Paragraph{On the use of re-routing for temporary failures.}

Since the use of re-routing has been proposed for permanent failures, a natural question that arises is whether or not such a technique could be used for temporary failures. 

However, the use of re-routing with temporary faulty links presents some problems that may lead to instability. Indeed, re-routed packets are not subject to any admissibility condition regarding the links of the new paths. Therefore, if links disappear and appear over time, that could provoke an accumulation of re-routed packets at some queues, whose occupancy could grow unboundedly.

%: figure

\begin{figure}[t]
\begin{center}
\includegraphics[scale=0.4]{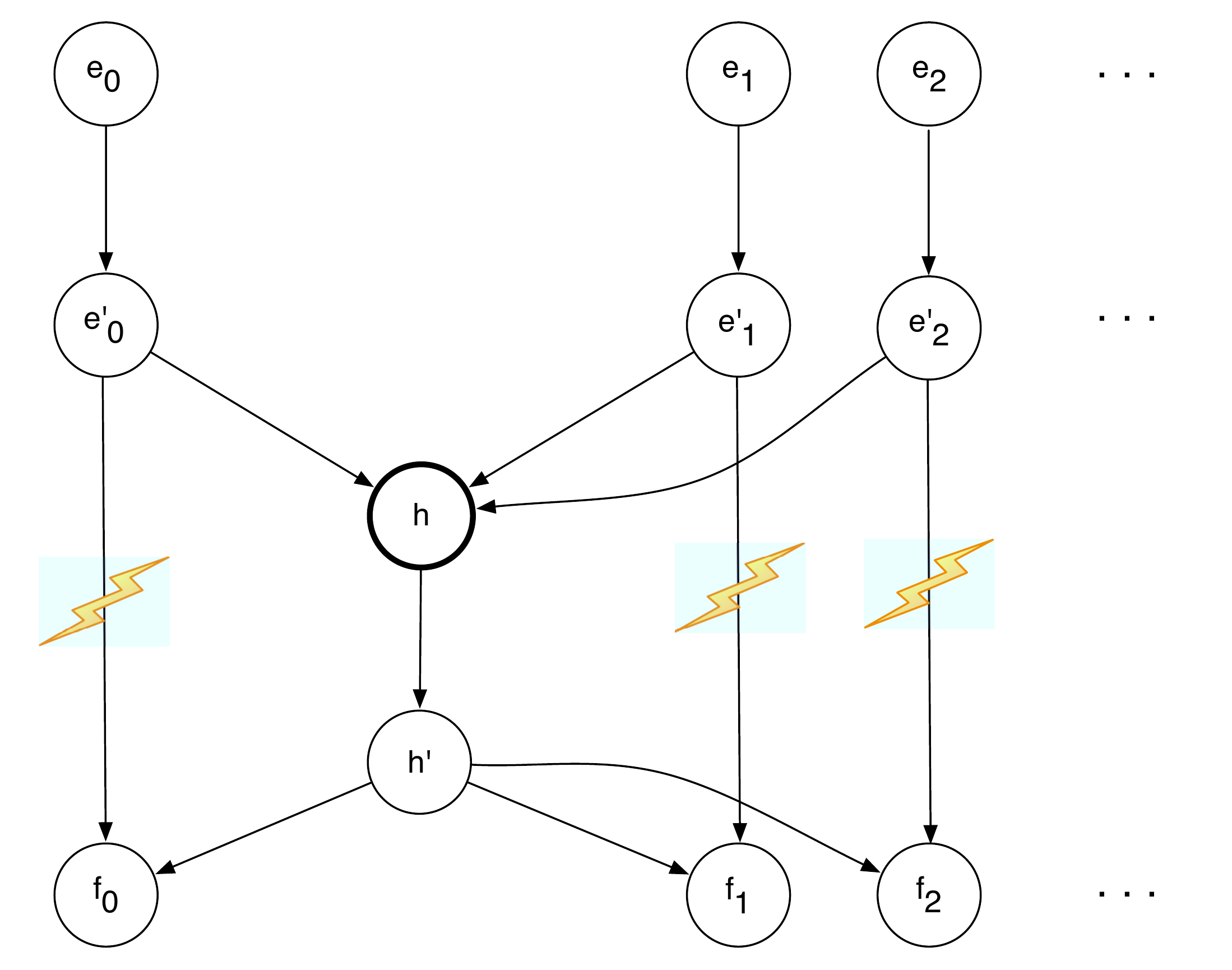}
\FFF\FFF

\parbox{\captionwidth}{\caption{\label{figure-no_rerouting}
The network topology used  in the proof of Lemma~\ref{lem:tem_rou} to show that the use of re-routing with temporary faulty links may lead to instability.}}
\end{center}
\end{figure}

%: lemma: temporary faulty links are treated as permanent

\begin{lemma}
\label{lem:tem_rou}
If temporary faulty links are treated as permanent, any scheduling policy is unstable in $\cal DF^{\it e}_{\delta}$.
\end{lemma}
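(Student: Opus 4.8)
The plan is to refute stability directly, by exhibiting one fixed network together with an explicit adversarial strategy whose effect is to accumulate an ever-growing backlog at a single bottleneck link, irrespective of how the scheduling policy resolves contention. The whole leverage comes from the observation made just before the lemma, and recorded in the second observation preceding Lemma~\ref{lem:bou}: a re-routed packet is subject to \emph{no} admissibility condition on the links of its new path. Treating every temporary fault as permanent means that each fault episode triggers the re-routing machinery of $\cal DF^{\it e}_{\delta}$ (the adversary stops injecting on the faulty link and re-routes the packets waiting there), while the fault being physically temporary lets the link reappear and be failed again. This turns re-routing into a pump that repeatedly feeds packets, free of admissibility charge, onto an alternative route.

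Concretely, on the topology of Figure~\ref{figure-no_rerouting} I would single out a direct link $e$ and an alternative sub-path that is forced through a designated bottleneck link $f$, so that every packet re-routed off $e$ must traverse $f$. The adversary then runs over a long interval of length $L$ two parallel streams, both at the full admissible rate: packets injected to cross $e$ (admissible on $e$), and packets injected to cross $f$ directly (admissible on $f$). Whenever packets are waiting at $e$, the adversary declares the link permanently failed, which re-routes the whole waiting batch onto the detour through $f$ at zero admissibility cost; it then allows $e$ to reappear and resumes injecting on $e$. The admissibility condition~\eqref{eqn:admissible} still caps the total injection destined for $e$ at about $rL+b$ over the interval, but all of these packets arrive at $f$ \emph{without} being counted against $f$'s budget.

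The core of the argument is a counting comparison of arrivals and departures at $f$. Over an interval of length $L$, the number of packets that must traverse $f$ is at most the direct admissible stream, $rL+b$, plus the re-routed stream coming off $e$, another $rL+b$; so the total is about $2rL+2b$. On the other hand, $f$ has unit capacity and can transmit at most $L$ packets in $L$ rounds. Choosing any fixed rate $r$ with $1/2<r<1$ gives $2rL+2b>L$ for large $L$, so the backlog at $f$ grows without bound. Crucially, this inequality is insensitive to the scheduling policy: no matter how the policy prioritizes among the packets sitting at $f$, it drains $f$ at a rate of at most one per round, whereas the combined (direct plus free re-routed) arrival rate at $f$ is about $2r>1$. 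This is exactly why the conclusion holds for \emph{any} scheduling policy, rather than for a single designated one as in the classical greedy-instability constructions.

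I expect the main obstacle to be the bookkeeping that makes the counting bound rigorous: verifying that the detour of Figure~\ref{figure-no_rerouting} forces every re-routed $e$-packet through $f$ (so that the re-routed stream really lands on the bottleneck and is not dispersed), that these packets indeed escape $f$'s admissibility by the model's re-routing rule, and that the ``declare permanent, then reappear'' cycle is consistent with the semantics of $\cal DF^{\it e}_{\delta}$ — in particular that the adversary halts $e$-injection during each fault yet never exceeds the global $e$-budget $rL+b$. I would dispose of these points by designing the topology so the detour is the unique route around $e$ through $f$, and by synchronizing the fault schedule with the injections so that the re-routed and direct packets are always simultaneously committed to $f$; the unbounded-backlog conclusion then follows from the arrivals-exceed-departures estimate above.
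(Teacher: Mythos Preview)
Your proposal is correct and rests on the same mechanism the paper exploits: once temporary faults are treated as permanent, each fail--recover cycle converts admissible traffic into re-routed traffic that reaches a designated bottleneck \emph{free of any admissibility charge}, and iterating this pump overloads the bottleneck regardless of the scheduling discipline. Where you diverge from the paper is in the concrete construction. The paper's topology in Figure~\ref{figure-no_rerouting} uses several parallel source--destination paths $e_i\to e'_i\to f_i$ that all share a common detour through a single link $h\to h'$; in each phase the adversary injects a burst on every path, fails all the $e'_i\to f_i$ links simultaneously, and the resulting $n$ re-routed bursts alone flood $h\to h'$ with arrivals at rate roughly $nr$, so choosing $n>1/r$ gives instability for \emph{every} $r<1$ with no direct traffic on the bottleneck at all. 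Your variant instead keeps a single re-routable link $e$ and adds a second, directly admissible stream on the bottleneck $f$, obtaining combined load about $2r$ and hence requiring $r>1/2$. Both arguments are sound and scheduling-independent for the same reason (arrivals exceed the unit drain rate of the bottleneck); the paper's multi-fan-in construction buys you instability at arbitrarily small rates, while yours is topologically simpler at the cost of the threshold $r>1/2$. Note that the actual Figure~\ref{figure-no_rerouting} is the multi-path gadget, not the single-$e$/single-$f$ diamond you describe, so if you keep your construction you should supply your own picture rather than cite that figure. The bookkeeping worries you flag (synchronising the burst, the fault declaration, and the recovery so that essentially all $e$-traffic is re-routed, and so that the $e$-bucket refills between cycles without overflow loss) are real but routine; the paper handles them only at the level of ``it is easy to see,'' so a clean write-up along your lines would in fact be more explicit than the original.
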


\begin{proof}
Let us consider the network topology in Figure~\ref{figure-no_rerouting} and assume that links fail in a temporary manner but are treated as permanent so this results in rerouting.
Assume moreover that bursts of $10$ packets are injected at each node $e_i$ with destination $f_i$ and passing through $e'_i$. Assume also that immediately after injecting these packets all links $e'_i \rightarrow f_i$ fail for $10$ rounds, and packets are re-routed through link $h \rightarrow h'$. 
After all packets have been re-routed, faulty links recover and the process is repeated forever. 
It is easy to see that the buffer occupancy of link $h \rightarrow h'$ will grow unbounded.
\end{proof}

%: paragraph: recovering from permanent faults

\Paragraph{Recovering from permanent failures.}

A consequence of the previously stated fact is that if permanent faulty links are allowed to recover then this could provoke instability, since a series of permanent failures and recoveries make the links behave as if experiencing transient faults. 
However, under some circumstances it is possible to allow permanent faulty links to recover and still guarantee stability.

%: theorem: permanent failures

\begin{theorem}
Any universally stable scheduling policy in $\cal DF^{\it e}_{\delta}$ remains universally stable when  permanent faulty links are allowed to recover, provided each link recovers only after all packets that have been re-routed because of its failure have reached their destinations.
\end{theorem}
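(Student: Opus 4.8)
The plan is to reduce the recovery-enabled model to $\cal DF^{\it e}_{\delta}$ by using the recovery constraint to prevent re-routed packets from accumulating across failure episodes. I would fix a policy $\cal P$ universally stable in $\cal DF^{\it e}_{\delta}$---hence also in the sub-model $\cal DF_{\delta}$---an arbitrary network $G=(V,E)$, and an arbitrary admissible execution in which permanent faulty links may recover. For each link $\ell$ I would partition its down-time into maximal \emph{failure episodes}, and I would split the packets into \emph{regular} ones, injected subject to the admissibility condition~\eqref{eqn:admissible}, and \emph{re-routed} ones. The structural observation driving the proof is that the hypothesis---$\ell$ recovers only after every packet re-routed because of its failure has reached its destination---forces the re-routed packets of two distinct episodes of the same link to be disjoint in time: those of an episode are all delivered before $\ell$ recovers, and the next episode cannot begin until after that recovery.

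Next I would bound the re-routed packets present in any single round. Applying the counting in Lemma~\ref{lem:bou} to one episode instead of to the whole execution, the packets re-routed in an episode of $\ell$ are the packets waiting at $\ell$ when it fails---bounded because $\cal P$ is universally stable in $\cal DF_{\delta}$---together with the packets injected through $\ell$ before the failure notification arrives, which is bounded because that notification is received within $\tau'$ rounds at a bounded injection rate. Let $M$ be the resulting uniform per-episode bound. By the disjointness observation, at every round the re-routed packets charged to each link come from at most one episode, so the number of re-routed packets present at any round is at most $|E|\cdot M$.

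I would then finish by the reduction used in the proof of Theorem~\ref{the:per}: regard each re-routed transmission as a stall notification that the adversary ignores, so that the re-routed packets show up only as an additional allowance in~\eqref{eqn:admissible}. Because at most $|E|\cdot M$ re-routed packets are ever simultaneously present---and each is cleared before its link recovers and spawns new ones---I would argue that this allowance is absorbed into the burstiness $b$ rather than into the rate $r$. The execution then matches one in $\cal DF_{\delta}$ with a larger but bounded burst, and universal stability of $\cal P$ in $\cal DF_{\delta}$ gives bounded queues.

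The hard part, on which I would concentrate, is precisely this last absorption: showing that \emph{repeated} re-routing adds only a bounded \emph{burst} and not extra injection \emph{rate}. This is exactly the phenomenon that fails without the constraint, since Lemma~\ref{lem:tem_rou} exhibits the unbounded build-up on an alternative link when successive episodes' re-routed packets are permitted to coexist. The entire argument therefore rests on using delivery-before-recovery to synchronize the creation of new re-routed packets with the clearing of old ones, so that it is the instantaneous re-routed load $|E|\cdot M$, and not the cumulative count, that enters the admissibility bound; I would make this precise by an amortized charging of each re-routed transmission to the finite pool of currently active re-routed packets, which is refreshed only upon delivery. A delicate point to check is that near a bottleneck the combined rate stays strictly below~$1$, which is where the strict inequality $r<1$ of universal stability is essential.
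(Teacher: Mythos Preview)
Your approach matches the paper's: bound re-routed packets per failure episode via Lemma~\ref{lem:bou}, use the delivery-before-recovery hypothesis to keep episodes of the same link from overlapping (so the instantaneous re-routed load is bounded by roughly $|E|\cdot M$), and then reduce to $\cal DF_{\delta}$ via the argument behind Theorem~\ref{the:per}. The paper's own proof is terser and simply cites Lemma~\ref{lem:bou} and Theorem~\ref{the:per} directly; you spell out the episode structure explicitly and, in addition, correctly flag the burst-versus-rate subtlety---Theorem~\ref{the:per}'s proof literally enlarges the burst by the bounded \emph{total} number of re-routed packets, whereas with recoveries only the \emph{instantaneous} count is bounded---which the paper glosses over but which your proposed amortized charging is the right way to address.
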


\begin{proof}
From Lemma~\ref{lem:bou}, the number of re-routed packets is bounded up to the round when no permanent faulty link recovers. 
That means that the number of re-routed packets caused by the permanent failure of each link is also bounded. 
Since each link is allowed to recover only when all re-routed packets caused by the permanent failure of that link have reached their destination,  the number of re-routed packets in $\cal DF^{\it e}_{\delta}$ remains bounded, by Lemma~\ref{lem:bou}.
Therefore, we can apply Theorem~\ref{the:per}, which guarantees that the scheduling policy will remain universally stable.
\end{proof}

\section{Conclusion}

\label{sec:conclusion}

We study routing in the suitable adversarial frameworks.
We investigate how unexpected packet delays may affect routing's performance.
Packet delays represent either malfunctioning of the network's infrastructure, implemented below the network layer, or transient unavailability of nodes due to energy saving policies.

We assume that routing protocols are embedded into flow control and connection control mechanisms. 
These mechanisms react to packet delays by spreading the suitable information through the network with the goal to decrease packet injection rates.

We propose how to study stability of various classes of scheduling policies in such network settings.
To this end, we  propose a new adversarial model that has delays built into its machinery. 
The model we consider is an extension of the regular leaky-bucket model, which is determined only by the injection rate and burstiness. 

Each transmission that fails results in feedback, which abstracts the flow control and connection control mechanisms.
We treat this feedback as if given to the adversary, because it decreases the adversary's capability to inject packets.

We demonstrated that all scheduling policies stable in the 2-priority wireline adversarial model are also stable in the new proposed model. 
That includes such popular scheduling policies as FTG, NFS and SIS. 

The model is flexible enough to cover permanent failures. We found that stable scheduling policies in the setting with temporal failures remain stable when some failures are permanent.

%: references

\bibliography{adversary-feedback}

\bibliographystyle{abbrv}

\end{document}